\documentclass[conference,a4paper]{IEEEtran}

\newcommand{\textred}[1]{\textcolor{red}{#1}}

\ifx\noeditingmarks\undefined
  \newcommand{\pgwrapper}[2]{\textred{#1: #2}}
\else
  \newcommand{\pgwrapper}[2]{}
\fi
\usepackage{soul}
\usepackage{framed}
\usepackage{bbm}
\usepackage{epsfig}
\usepackage{times}
\usepackage{float}
\usepackage{afterpage}
\usepackage{amsmath}
\usepackage{amstext}
\usepackage{amssymb,bm}
\usepackage{latexsym}
\usepackage{color}
\usepackage{graphicx}
\usepackage{amsmath}
\usepackage{amsthm}
\usepackage{graphicx}
\usepackage[center]{caption}
\usepackage{caption}
\usepackage{subcaption}
\usepackage{booktabs}
\usepackage{multicol}
\usepackage{lipsum}
\usepackage{verbatim}

\newtheorem{lemma}{Lemma}

\newtheorem{theorem}{Theorem}

\usepackage[usenames,dvipsnames]{xcolor}

\newcommand{\widesim}[2][1.5]{
  \mathrel{\overset{#2}{\scalebox{#1}[1]{$\sim$}}}
}
\newcommand{\PP}{\mathbb{P}}
\newcommand{\simiid}{\widesim{\text{i.i.d}}}

\newcommand{\hsig}{\hat{\sigma}}
\newcommand{\Hsig}{\hat{\Sigma}}

\begin{document}

\title{On Identifying a Massive Number of Distributions}
\author{
\IEEEauthorblockN{
Sara Shahi, Daniela Tuninetti and Natasha Devroye\\}
\IEEEauthorblockA{%
University of Illinois at Chicago, Chicago IL 60607, USA. \\
Email: {\tt sshahi7, danielat, devroye @uic.edu}}%
}

\maketitle

\begin{abstract}
THIS PAPER IS ELIGIBLE FOR THE STUDENT PAPER AWARD.
Finding the underlying probability distributions of a set of observed sequences under the constraint that each sequence is generated i.i.d by a distinct distribution is considered. 
The number of distributions, and hence the number of observed sequences, are let to grow with the observation blocklength $n$. Asymptotically matching upper and lower bounds on the probability of error are derived. 
\end{abstract}

\section{Introduction}
\label{sec:intro}
Hypothesis testing is a classical problem in statistics where one is given a random observation vector and one seeks to identify the distribution from a given set of distributions that generated it. 
Pioneering work in classical hypothesis testing include the proof of the 
optimality of likelihood ratio tests under certain criteria in the  Neyman-Pearon Theorem~\cite{neyman1933problem}. 
Derivation of error exponents of different error types and their trade-offs for binary and M-ary hypothesis testing in~\cite{blahut1974hypothesis} and~\cite{tuncel2005extensions} and the analysis of 
sequential hypothesis testing in~\cite{wald1945sequential}.

The classical identification problem, which includes hypothesis testing as a special case,  is consist of a finite number of distinct sources, each generating a sequence of i.i.d samples. The problem is to find the  underlying distribution of each sample sequence, given the constraint that each sequence is generated by a distinct distribution. 
With this constraint the number of hypothesis is exponential in the number of distributions. 
If one neglects the fact that the sequences are generated by distinct distributions, 
 the problem boils down to multiple M-ary hypothesis testing problems. 
This approach is suboptimal as it fails to exploit some of the (possibly useful) constraints. 

In~\cite{ahlswede2006logarithmically}, the authors study the the Logarithmically Asymptotically Optimal (LAO) Testing of identification problem for a  finite number of distributions. In particular, they study the identification of only two different objects in detail  and find the reliability matrix, which consist of the error exponents of all error types. Their optimality criterion is to find the largest error exponent  for a set of error types for given values of the other error types error exponent. The same problem with a different optimality criterion was also studied  in~\cite{unnikrishnan2015asymptotically}, where multiple, finite, sequences were matched to the source distributions. More specifically, they proposed a test for a generalized Neyman-Pearson-like optimality criterion to minimize the rejection probability given that all other error probabilities decay exponentially with a pre-specified slope. 

In here, we assume $A$ sequences of length $n$ are generated i.i.d according to $A$ distinct distributions; in particular random vectors $X_i^n  \simiid P_{\sigma_i},i\in[1:A]$, for some unknown permutation $\sigma$ of the distributions. The goal is to reliably identify the permutation $\sigma$ with vanishing error probability as $n\to\infty$ from an observation of $[X_1^n,\ldots, X_A^n]$. This problem has close ties with  de-anonymization of anonymized data~\cite{unnikrishnan2015asymptotically}. A different motivation is the identification of users using only channel output sequences, without the use of pilot / explicit identification signals~\cite{shahi2016isit}.  In both scenarios, the problem's difficulty increases with the number of users. In addition, in modeling the systems with a massive number of users (such as the Internet of Things), it may be reasonable to assume that the number of users grow with the transmission blocklength~\cite{shahi2016isit},~\cite{chen2017capacity}, and that the  user's identities must be distinguished from the received data.  
 As the result, it is useful to understand exactly how the number of distributions affects the system performance, in particular for the case that the cardinality of the distributions grows  with the blocklength. Notice that in this scenario, the number of hypothesis, would be doubly exponential in blocklength and the analysis of the optimal decoder becomes much harder than the classical (with constant number of distributions) identification problems. 

{\bf Contributions.} In this paper, we consider the identification problem  for the case that the number of distributions grow with the observation blocklength $n$ as motivated by the massive user identification problem in the Internet of Things  paradigm. The key novel element in this work consist of analyzing and reducing the complexity of the optimal maximum likelihood decoder, with double exponential number of hypothesis, using a graph theoretic result. In particular, we find 
\begin{enumerate}
\item Find matching upper and lower bounds on the probability of  error. This result specifies the relation between  the growth rate of the number of distributions and the pairwise distance of the distributions for reliable identification.
\item We show that the probability that more than two distributions are incorrectly identified  is dominated by the probability of the event that only two distributions are incorrectly identified.
\item  
We show that the arithmetic mean of the cycles gains (where we define the cycle gain as the product of the edge weights within the cycle) in a graph can be upper bounded by a function of  the sum of the squares of the edge weights. This may be of independent interest.
\end{enumerate}

\section{Notation}
 Capital letters represent random variables that take on lower case letter values in calligraphic letter alphabets. For a set of finite alphabet $\mathcal{X}$, we use $\mathcal{P}_{\mathcal{X}}$ to denote the set of all possible distributions on $\mathcal{X}$. A vector of length $n$ is defined by $x^n=[x_1,\ldots, x_n]$. When all elements of the random vector $X^n$ are generated i.i.d according to distribution $P$, we denote it as $X^n\simiid P$. We use $S_n$, where $\vert S_n\vert =n!$, to denote the set of all possible permutations of a set of $n$ elements. For a permutation $\sigma\in S_n$, $\sigma_i$ denotes the $i$-th element of the permutation. $\lfloor x\rfloor_r$ is used to denote the remainder of $x$ divided by $r$. The indicator function of event $A$ is denoted by $\mathbbm{1}_{\{A\}}$. We use the notation $a_n\doteq e^{nb}$ when $\lim_{n\to \infty}\frac{\log a_n}{n}=b$.  
 
 $K_k\left(a_1,\ldots, a_{\binom{k}{2}}\right)$ is the complete graph with $k$ nodes with edge index $ i\in [1:\binom{k}{2}]$ and edge weights $a_i, \ i\in [1:\binom{k}{2}]$. We may drop the edge argument and simply write $K_k$ when the edge specification is not needed. 
 A cycle $c$ of length $r$ in $K_k$ may be interchangeably defined by a vector of vertices as $c^{(v)}=\left[v_1,\ldots, v_r\right]$ or by a set of edges $c^{(e)}=\left\{a_1,\ldots, a_r \right\}$ where $a_i$ is the edge between $(v_i, v_{i+1}),\forall i\in [1:r-1]$ and $a_r$ is that between $(v_r, v_1)$. With this notation, $c^{(v)}(i)$ is then used to indicate the $i$-th vertex of the cycle $c$.
  $C^{(r)}_k$ is used to denote the set of all cycles of length $r$ in the complete graph $K_k\left(a_1,\ldots, a_{\binom{k}{2}}\right)$. The cycle gain, denoted by $G(c)$, for cycle $c=\left\{a_1,\ldots, a_r \right\}\in C^{(r)}_k$ is the product of the edge weights within the cycle $c$, i.e., $G(c) = \prod_{i=1}^ra_i, \ \forall a_i\in c$. 
 
\section{Problem formulation}
\label{sec:problem}

Let $P:=\{P_1, \ldots , P_{A}\}\subset \mathcal{P}_{\mathcal{X}}$ consist of $A$ distinct distributions and also let $\Sigma$  be uniformly distributed over $S_A$, the set of permutations of $A$ elements. In addition, assume that we have $A$ independent random vectors $\{X_1^n,X_2^n,\ldots, X_{A}^n\}$ of length $n$ each.  
For $\sigma$, a realization of $\Sigma$, assign the distribution $P_{\sigma_i}^n$ to the random vector $X_i^n, \forall i\in [1:A]$. 
After observing a sample $x^{nA}=[x_1^n,\ldots, x_A^n]$ of the random vector $X^{nA}=\left[X_1^n,\ldots, X_A^n \right]$, 
 we would like to identify $P_{\sigma_i}, \forall i\in[1:A]$.
 More specifically, we are interested in finding a permutation $\hsig: \mathcal{X}^{nA} \to S_A$ to indicate that $X_i^n\simiid P_{\hsig_i}, \ \forall i\in [1:A]$. Let $\Hsig = \hsig(X^{nA})$.

The average probability of error for the set of distributions $P$ 
 is given by 
\begin{align}
P_e^{(n)}&=\PP\left[\hat{\Sigma}\neq \Sigma \right]\notag\\
&=\frac{1}{(A)!}\sum_{\sigma \in S_{A}}\PP\left[\hat{\Sigma} \neq \sigma\vert X_i^n \simiid P_{\sigma_i}, \forall i\in [1:A] \right]\notag\\
&=\PP\left[\hat{\Sigma} \neq [1:A]\big\vert H_{(1,\ldots, A)} \right].~
\end{align}
where $H_{(1,\ldots, A)}:=\{X_i^n \simiid P_{i}, \forall i\in [1:A]\}$.

We say that a set of distributions $P$ are identifiable if $\lim_{n\to \infty}P_e^{(n)}\to 0$. 
\begin{theorem}\label{thm:main}
A sequence of distributions $P=\{P_1,\ldots,P_{A_n}\}$ are identifiable iff
\[\lim_{n\to \infty}\sum_{\substack{1\leq i< j\leq A_n}}e^{-2nB(P_i,P_j)}=0,\]
where $B(P,Q)=B(Q,P)=-\log\PP_{P}\left[\left(\frac{Q}{P}\right)^{1/2} \right]$ is the Bhattacharya distance between  the distributions $P$ and $Q$.
\end{theorem}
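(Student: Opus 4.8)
The plan is to analyze the optimal (maximum-likelihood) decoder, which coincides with MAP since the prior on $\Sigma$ is uniform on $S_{A_n}$, and, exploiting the symmetry already used in the problem formulation, to bound $P_e^{(n)}=\PP[\Hsig\neq[1:A_n]\mid H_{(1,\ldots,A_n)}]$, i.e. to assume the identity permutation is true. The decoder compares the likelihoods $\prod_i P_{\sigma_i}^n(x_i^n)$ over all $\sigma$. The starting observation is that $e^{-2nB(P_i,P_j)}$ is exactly the Bhattacharya bound on the pairwise probability of confusing the identity with the transposition of $i$ and $j$: writing $\rho_{ij}=\sum_x\sqrt{P_i(x)P_j(x)}=e^{-B(P_i,P_j)}$, the swap test between $P_i^n\otimes P_j^n$ and $P_j^n\otimes P_i^n$ has Bhattacharya coefficient $\rho_{ij}^{2n}=e^{-2nB(P_i,P_j)}$. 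More generally, the pairwise error between the identity and an arbitrary $\sigma$ is at most $\prod_i e^{-nB(P_i,P_{\sigma_i})}$, which factorizes over the cycles of $\sigma$ into a product of cycle gains $G(c)=\prod_{e\in c}a_e$ with edge weights $a_{(u,v)}=e^{-nB(P_u,P_v)}$.

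For the achievability (sufficiency) I would union-bound $P_e^{(n)}\le\sum_{\sigma\neq\mathrm{id}}\prod_{c\in\sigma}G(c)$, the sum ranging over the doubly exponentially many non-identity permutations. Grouping by cycle type, the single-transposition terms contribute exactly $\sum_{i<j}a_{(i,j)}^2=\sum_{i<j}e^{-2nB(P_i,P_j)}=:S_n$, the quantity in the theorem, and the crux is to show every other contribution is negligible compared with $S_n$. For this I would invoke the paper's graph-theoretic cycle-gain bound (the third contribution): letting $W$ be the symmetric weight matrix with entries $a_{(i,j)}$, the length-$r$ cycle gains obey $\sum_{c\in C^{(r)}_{A_n}}G(c)\le \tfrac{1}{2r}\,\mathrm{tr}(W^r)\le \tfrac{1}{2r}\big(\sum_e a_e^2\big)^{r/2}=\tfrac{1}{2r}(2S_n)^{r/2}$, since $\sum_i\lambda_i^2=\mathrm{tr}(W^2)=2S_n$ forces $|\lambda_i|\le\sqrt{2S_n}$. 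Summing over disjoint cycle choices via the elementary bound $\sum_{\sigma}\prod_{c}G(c)\le\prod_{\text{cycles }c}(1+G(c))\le\exp\!\big(\sum_{r\ge2}\sum_{c\in C^{(r)}_{A_n}}G(c)\big)$ then yields $P_e^{(n)}\le \exp(S_n+O(S_n^{3/2}))-1$, so $S_n\to0$ implies $P_e^{(n)}\to0$. This step simultaneously establishes the second contribution, that the $r=2$ (two-distribution) term dominates.

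For the converse (necessity) I would lower-bound the error by the probability that some transposition is strictly preferred to the identity, $P_e^{(n)}\ge\PP\big[\bigcup_{i<j}E_{ij}\big]$ with $E_{ij}=\{P_j^n(X_i^n)P_i^n(X_j^n)\ge P_i^n(X_i^n)P_j^n(X_j^n)\}$. Because the swap test is symmetric, its Chernoff information is attained at $s=\tfrac12$ and equals $2B(P_i,P_j)$, so a Chernoff-type lower bound gives $\PP[E_{ij}]\ge e^{-2nB(P_i,P_j)-o(n)}$, matching the Bhattacharya upper bound $\PP[E_{ij}]\le e^{-2nB(P_i,P_j)}$. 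I would then apply a Bonferroni (inclusion--exclusion) lower bound, using that $E_{ij}$ and $E_{kl}$ are independent when $\{i,j\}\cap\{k,l\}=\emptyset$, so that the disjoint intersection terms satisfy $\sum\PP[E_{ij}\cap E_{kl}]\le S_n^2$; this shows that when $S_n\not\to0$ the union probability, and hence $P_e^{(n)}$, stays bounded away from $0$.

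I expect the achievability reduction to be the main obstacle: collapsing the union bound over the doubly exponential permutation set down to the single sum $S_n$ is precisely what the cycle-gain inequality is built for, and bounding the length-$r$ cycle sums by $(\sum_e a_e^2)^{r/2}$ uniformly in $r$, then summing the resulting geometric-type series, is the delicate part. On the converse side the remaining subtlety is controlling the sub-exponential $o(n)$ factors in the Chernoff lower bound and the dependent index-sharing intersection terms, so that the pairwise-swap events alone certify non-vanishing error whenever $S_n$ does not vanish.
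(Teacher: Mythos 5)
Your achievability argument is correct but takes a genuinely different route from the paper's. The paper avoids summing over all permutations by an event-inclusion argument (a multi-cycle error event is contained in the union of its single-cycle sub-events), reducing the union bound to a sum over single cycles, and then controls $\sum_{c\in C^{(r)}_{A_n}}G(c)$ by a combinatorial AM--GM grouping lemma together with the count $N_{r,A_n}/\binom{A_n}{2}^{r/2}\le 4^r$, summing the resulting geometric series. You instead sum over all permutations via $\prod_c\bigl(1+G(c)\bigr)-1\le \exp\bigl(\sum_{r\ge2}\sum_c G(c)\bigr)-1$ and control the cycle sums spectrally: $\sum_{c\in C^{(r)}_{A_n}}G(c)\le \tfrac{1}{r}\,\mathrm{tr}(W^r)\le \tfrac1r\bigl(\mathrm{tr}(W^2)\bigr)^{r/2}=\tfrac1r(2S_n)^{r/2}$. (Note the constant: a $2$-cycle corresponds to $r=2$, not $2r=4$, closed walks, so the prefactor must be $1/r$, not $1/(2r)$; this is cosmetic.) Your spectral bound is cleaner and tighter than the paper's $4^r S_n^{r/2}$, it handles odd $r$ uniformly (the paper's AM--GM proof needs a separate power-of-$2m$ device for odd $r$), and it yields the same conclusions, including the domination of the $r=2$ events. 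This is a legitimate and arguably preferable substitute for the paper's Lemma~1.

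The converse is where you have a genuine gap: the difference-form Bonferroni bound $\PP[\cup E_{ij}]\ge\sum\PP[E_{ij}]-\sum\PP[E_{ij}\cap E_{kl}]$ cannot establish the theorem, because the negation of the identifiability condition includes the regime $S_n\to\infty$ (e.g., $A_n=\lceil e^{n\beta}\rceil$ distributions with all pairwise Bhattacharya distances at most $\beta/4$, giving $S_n\ge\binom{A_n}{2}e^{-n\beta/2}\to\infty$). In that regime $\sum\PP[E_{ij}]\approx S_n$ while the disjoint-pair intersections alone contribute roughly $\tfrac12 S_n^2$, so your lower bound becomes negative and vacuous precisely when the events are most numerous. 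The paper's fix is exactly to use the ratio form (the Chung--Erd\H{o}s second-moment inequality), $\PP[\cup\xi_{ij}]\ge\bigl(\sum\PP[\xi_{ij}]\bigr)^2/\sum\PP[\xi_{ij}\cap\xi_{kl}]$, which degrades gracefully: with the shared-index (triangle) terms bounded by $8S_n^{3/2}$ and the disjoint terms by $S_n^2$, it gives $P_e^{(n)}\gtrsim\sqrt{S_n}/(8+\sqrt{S_n})$, which stays bounded away from zero however large $S_n$ is. A truncation repair of Bonferroni (restricting to a sub-family of pairs with total probability of order one) is possible in principle but is delicate, because the shared-index intersections $\PP[E_{ij}\cap E_{ik}]$ involve the third edge $(j,k)$, which need not lie in the chosen sub-family; the ratio form sidesteps this entirely. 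Also note that your bound on the shared-index terms is left unresolved ("a remaining subtlety"), whereas it is an essential ingredient: the paper bounds $\PP[\xi_{ij}\cap\xi_{ik}]\le e^{-n(B(P_i,P_j)+B(P_j,P_k)+B(P_i,P_k))}$ and then needs the cycle-gain lemma again (for triangles) to sum these; your spectral inequality with $r=3$ would serve the same purpose, so you do have the right tool in hand, but it must actually be deployed inside a ratio-form bound.
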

 
\begin{proof}
 As it is obvious from the result of Theorem~\ref{thm:main}, for the case that $A_n=A$ is a constant or the case that $A_n=O(n)$, the sequence of distributions in $P$ are always identifiable and the probability of error in the identification problem decays to zero as the blocklength $n$ goes to infinity. The interesting aspect of Theorem~\ref{thm:main} is in fact in the regime that $A_n$ increases exponentially with the blocklength. 

To prove Theorem~\ref{thm:main}, we provide  upper and lower bounds on the probability of error  in the following subsections.
\subsection{Upper bound on the probability of error}\label{subsec:achievability}
We use the optimal  Maximum Likelihood (ML) decoder which minimizes the average probability of error, given by
\begin{align}
\hsig(x_1^n,\ldots, x_{A_n}^n):=\arg\max_{\sigma\in S_{A_n}}\sum_{i=1}^{A_n} \log \left(P_{\sigma_i}\left( x_i^n\right) \right),\label{eq:ml achieve}
\end{align}
where $P_{\sigma_i}\left( x_i^n\right)=\prod_{t=1}^nP_{\sigma_i}\left( x_{i,t}\right)$. 
The average probability of error associated with the ML decoder can also be written as
\begin{align}
&P_e^{(n)}=\PP\left[\Hsig \neq [1:A_n]\big\vert H_{(1,\ldots, A_n)} \right]\notag
\\
&= \PP\left[\bigcup_{\hsig\neq [1:A_n]}\Hsig=\hsig\big\vert H_{(1,\ldots, A_n)}\right]\notag
\\
&=\PP\left[\bigcup_{r=2}^{A_n} \bigcup_{\substack{\hsig:\\\left\{\sum_{i=1}^{A_n}\mathbbm{1}_{\{\hsig_i\neq i\}}=r \right\}}} \!\!\!\!\Hsig =\hsig \big\vert H_{(1,\ldots, A_n)} \right]\label{eq:2<r}
\\
&=\PP\Bigg[\bigcup_{r=2}^{A_n} \bigcup_{\substack{\hsig:\\\left\{\sum_{i=1}^{A_n}\mathbbm{1}_{\{\hsig_i\neq i\}}=r \right\}}} \sum_{i=1}^{A_n}\log\frac{P_{\hsig_i}}{P_i}\left(X_i^n \right)\geq 0
  \big\vert H_{(1,\ldots, A_n)}\Bigg]
\label{eq:error double cycle}
\end{align}
where $\log \frac{P_i}{P_j}(X^n):=\sum_{t=1}^n\log \frac{P_i(X_t)}{P_j(X_t)}$ and where~\eqref{eq:2<r} is due to the requirement that each sequence is distributed according to a distinct distribution and hence the number of incorrect distributions ranges from $[2:A_n]$. Equation~\eqref{eq:error double cycle} is also the consequence of the ML decoder defined in~\eqref{eq:ml achieve}.
In order to avoid considering the same set of error events multiple times, we incorporate a graph theoretic interpretation of $\left\{\sum_{i=1}^{A_n}\mathbbm{1}_{\{\Hsig_i\neq i\}}=r \right\}$ in~\eqref{eq:error double cycle}. Consider the two sequences $[i_1,\ldots, i_r]$ and $[\hsig_{i_1},\ldots, \hsig_{i_r}]$ for which we have 
\[\left\{\sum_{i=1}^{A_n}\mathbbm{1}_{\{\hsig_i\neq i\}}=\sum_{j=1}^{r}\mathbbm{1}_{\{\hsig_{i_j}\neq i_j\}}=r \right\}.\]
These two sequences in~\eqref{eq:error double cycle} in fact indicate the event that we have (incorrectly) identified $X_{i_j}^n\simiid P_{\hsig_{i_j}}$ instead of the (true) distribution  $X_{i_j}^n\simiid P_{{i_j}}, \forall j\in [1:r]$. 
 For a complete graph $K_{A_n}$, the set of edges between $\left((i_1,\hsig_{i_1}),\ldots, (i_r,\hsig_{i_r})\right)$ in $K_{A_n}$ would produce a single cycle of length $r$ or a set of disjoint cycles with total length $r$. However, we should note that in the latter case where the sequence of edges construct a set of (lets say of size $L$) disjoint cycles (each with some length $\tilde{r}_l$ for $\tilde{r}_l<r$ such that $\sum_{l=1}^L \tilde{r}_l=r$), then those cycles and their corresponding  sequences are already taken into account in the (union of) set of $\tilde{r}_l$ error events. 

As an example, assume $A_n=4$ and consider the error event
\[\log\frac{P_2}{P_1}(X_1^n)+\log\frac{P_1}{P_2}(X_2^n)+\log\frac{P_4}{P_3}(X_3^n)+\log\frac{P_3}{P_4}(X_4^n)\geq 0,\]
which corresponds to the (error) event of choosing $[\hsig_1, \hsig_2,\hsig_3,\hsig_4]=[2,1, 4, 3]$ over $[1,2,3,4]$ with $r=4$ errors. In the graph representation, this gives two cycles of length $2$ each, which correspond to
\begin{align*}
\log\frac{P_2}{P_1}(X_1^n)+\log\frac{P_1}{P_2}(X_2^n)\geq 0\ \cap\\
\log\frac{P_4}{P_3}(X_3^n)+\log\frac{P_3}{P_4}(X_4^n)\geq 0,
\end{align*} 
and are already accounted for in the events $\left\{[\hsig_1, \hsig_2,\hsig_3,\hsig_4]=[2,1, 3, 4]\right\}\cup \left\{[\hsig_1, \hsig_2,\hsig_3,\hsig_4]=[1,2, 4, 3]\right\}$ with $r=2$.
 
 As the result, in order to avoid double counting, in calculating the value of~\eqref{eq:error double cycle} for each $r$ we should only consider the sets of sequences which produce a {\it single} cycle of length $r$. Hence, we can upper bound the probability of error in~\eqref{eq:error double cycle} as (where we drop the conditioning for ease of notation)
\begin{align}
P_e^{(n)}&\leq\sum_{r=2}^{A_n} 
\sum_{\substack{c\in C^{(r)}_{A_n}}} \PP\left[\sum_{i=1}^r\log \frac{P_{\lfloor c^{(v)}(i+1)\rfloor_{r}}}{P_{c^{(v)}(i)}}\left(X^n_{c^{(v)}(i)}\right)\geq 0\right]\notag\\
&\leq \sum_{r=2}^{A_n} 
\sum_{\substack{c\in C^{(r)}_{A_n}}} e^{-n\sum_{i=1}^r B(P_{c^{(v)}(i)}, P_{c^{(v)}(\lfloor i+1\rfloor_{r})})}\label{eq:ml achievability}
\\
&=\sum_{r=2}^{A_n} \sum_{c\in C_{A_n}^{(r)}} G(c)\label{eq:achieve graph},
\end{align}
where $r$ enumerates the number of incorrect matchings 
and where $c(i)$ is the $i$-th vertex in the cycle $c$. The inequality in~\eqref{eq:ml achievability} is by
\begin{align}
 &
 \PP\left[\sum_{i=1}^r\log \frac{P_{\lfloor c^{(v)}(i+1)\rfloor_{r}}}{P_{c^{(v)}(i)}}\left(X^n_{c^{(v)}(i)}\right)\geq 0\right]\notag
\\&\leq \exp\left\{n\inf_t \log \mathbb{E}\left[ \prod_{i=1}^r \left(\frac{P_{ c^{(v)}(\lfloor i+1\rfloor_{r})}}{P_{c^{(v)}(i)}}\left(X^n_{c(i)}\right)\right)^t \right]\right\}\notag
\\&\leq  \exp\left\{n\sum_{i=1}^r \log \mathbb{E}\left[  \left(\frac{P_{c^{(v)}(\lfloor i+1\rfloor_{r})}}{P_{c^{(v)}(i)}}\left(X^n_{c(i)}\right)\right)^{1/2} \right]\right\}\label{eq:t = 1/2}
\\&=  \exp\left\{-n\sum_{i=1}^r B(P_{c^{(v)}(i)}, P_{ c^{(v)}(\lfloor i+1\rfloor_{r})})\right\}.\notag
\end{align}
In~\eqref{eq:achieve graph}, we have also defined $e^{-nB(P_i,P_j)}$ to be the edge weight between vertices $(i,j)$  in the complete graph $K_{A_n}$. Hence $ G(c)=e^{-n\sum_{i=1}^r B(P_{c^{(v)}(i)}, P_{c^{(v)}(\lfloor i+1\rfloor_{r})})}$ is the  gain of cycle $c$. 

The fact that we used $t=1/2$ in~\eqref{eq:t = 1/2} instead of finding the exact optimizing $t$, comes from  the fact that $t=1/2$ is the optimal choice for $r=2$ and as we will see later, the rest of the error events are dominated by the set $2$ incorrect distributions. This can be seen as follows for $X_1^n\simiid P_1, X_2^n\simiid P_2$
\begin{align}
&\PP\left[\log\frac{P_1}{P_2}(X^n_2)+\log\frac{P_2}{P_1}(X_1^n)\geq 0 \right]\notag
\\
&=\sum_{\substack{\hat{P}_1,\hat{P_2}:\\\sum_{x\in \mathcal{X}}\hat{P}_1(x)\log \frac{P_2(x)}{P_1(x)}+\\\hat{P_2}(y)\log\frac{P_1(x)}{P_2(x)}\geq 0}}\exp\left\{-nD\left(\hat{P}_1\parallel P_1\right)-nD\left(\hat{P}_2\parallel P_2 \right)\right\}\notag
\\&\doteq e^{-nD\left(\tilde{P} \parallel P_1\right)-nD\left(\tilde{P} \parallel P_2 \right)}=e^{-2nB(P_1, P_2)},\label{eq:lagrange}
\end{align}
where $\tilde{P}$ in the first equality in~\eqref{eq:lagrange}, by using the Lagrangian method, can be shown to be equal to $\tilde{P}(x)=\frac{\sqrt{P_1(x)P_2(x)}}{\sum_{x'}\sqrt{P_1(x')P_2(x')}}$ and subsequently the second inequality in~\eqref{eq:lagrange} is proved.

In order to further simplify the expression in~\eqref{eq:achieve graph}, we use the following graph theoretic Lemma, the proof of which is given in the Appendix.

\begin{lemma}\label{lemma:graph}
In a complete graph $K_{k}\left(a_1,\ldots, a_{n_k}\right)$ and for the set of cycles of length $r, \mathcal{C}_{k}^{(r)}=\{c_1,\ldots c_{N_{r,k}}\}$ we have
\begin{align}
\frac{1}{N_{r,k}}\left( G(c_1)+\ldots G(c_{N_{r,k}})\right)& \leq \left(\frac{a_1^2+\ldots+ a_{n_k}^2}{n_k}\right)^{\frac{r}{2}}\notag
\end{align}
where $N_{r,k}, n_k$ are the number of cycles of length $r$ and the number of edges in the complete graph $K_k$, respectively.
\end{lemma}
By Lemma~\ref{lemma:graph} and~\eqref{eq:achieve graph} we can write
\begin{align}
P_e^{(n)}&\leq \sum_{r=2}^{A_n}\sum_{c\in C_{A_n}^{(r)}}G(c)\notag\\
&\leq \sum_{r=2}^{A_n} \frac{N_{r,A_n}}{\left({n_{A_n}}\right)^{\frac{r}{2}}}\left( a_1^2+\ldots+a_{n_{A_n}}^2\right)^{r/2}\notag\\
&\leq \sum_{r=2}^{A_n} 4^r\left( \sum_{\substack{1\leq i<j\leq A_n}}e^{-2nB(P_i,P_j)}\right)^{r/2}\label{eq:nN}\\
&\leq \frac{16\left(\sum_{\substack{1\leq i<j\leq A_n}}e^{-2nB(P_i,P_j)}\right)}{1-4\sqrt{\sum_{\substack{1\leq i<j\leq A_n}}e^{-2nB(P_i,P_j)}}}\label{eq:achieve final ub},
\end{align} 
where~\eqref{eq:nN} is by Fact 1 (see Appendix) and
\[\frac{N_{r,A_n}}{\left( {n_{A_n}}\right)^{r/2}}=\frac{\binom{A_n}{r}(r-1)! /2}{\left(\binom{A_n}{2}\right)^{r/2}}\leq 4^{r}.\]
 The upper bound on the  probability of error in~\eqref{eq:achieve final ub}  goes to zero if 
\[\lim_{n\to \infty}\sum_{\substack{1\leq i<j\leq A_n}}e^{-2nB(P_i,P_j)}=0.\]
As a result of Lemma~\ref{lemma:graph}, it can be seen from~\eqref{eq:nN} that the sum of probabilities that $r\geq 3$ distributions are incorrectly identified is dominated by the probability that only $r=2$ distributions are incorrectly identified. This shows that the most probable error event is indeed the error events with two wrong distributions.

\subsection{Lower bound on the probability of error}
For our converse, we use the optimal ML decoder, and as a lower bound to the probability of error in~\eqref{eq:error double cycle}, we only consider the set of error events with only two incorrect distributions, i.e. the set of events with $r=2$. In this case we have
\begin{align}
P_e^{(n)}&\geq \PP\left[ \bigcup_{\substack{1\leq i<j\leq A_n}} \log\frac{P_i}{P_j}(X^n_j)+\log\frac{P_j}{P_i}(X_i^n)\geq 0\right]\notag
\\&\geq \frac{\left(\sum_{\substack{1\leq i<j\leq A_n}}  \PP\left[\xi_{i,j}\right]\right)^2}{\sum_{\substack{(i,j),(j,k)\\(i,j)\neq(l,k)\\i\neq j, l\neq k}}\PP[\xi_{i,j},\xi_{k,l}]}
\label{eq:conv lb}
\end{align}
where~\eqref{eq:conv lb} is by~\cite{chung1952application}
 and where 
\begin{align}
&\xi_{i,j}:= \left\{\log\frac{P_i}{P_j}(X^n_j)+\log\frac{P_j}{P_i}(X_i^n)\geq 0\right\}.
\label{eq:1/2 optimal}
\end{align}
We upper bound the denominator of~\eqref{eq:conv lb} by  
\begin{align}
&\!\PP[\xi_{i,j}, \xi_{i,k}]=\PP\left[ \log\frac{P_i}{P_j}(X^n_j)+\log\frac{P_j}{P_i}(X_i^n)\geq 0\ \right.\notag
\\&
\qquad  \left.\cap\ \log\frac{P_i}{P_k}(X^n_k)+\log\frac{P_k}{P_i}(X_i^n)\geq 0 \right]\notag
\\&
\leq \PP \left[ \log\frac{P_i}{P_j}(X^n_j)+\log\frac{P_j}{P_i}(X_i^n) \right.\notag
\\
&\qquad \left.  +\log\frac{P_i}{P_k}(X^n_k)+\log\frac{P_k}{P_i}(X_i^n) \geq 0 \right]\notag
\\&
\leq \exp\Bigg\{n \inf_t  \notag\\
&\log\left( \mathbb{E}\left[ \left(\!\frac{P_i}{P_j}(X^n_j) \cdot \frac{P_j}{P_i}(X_i^n) \cdot \frac{P_i}{P_k}(X^n_k)\cdot \frac{P_k}{P_i}(X_i^n)\right)^t \right]\right) \Bigg\}\notag
\\&
\!\!\leq \exp\left\{\!n \log \mathbb{E}\!\!\left[\! \left(\!\frac{P_i}{P_j}(X^n_j)\!\cdot\!\frac{P_j}{P_i}(X_i^n)\!\cdot \! \frac{P_i}{P_k}(X^n_k)\! \cdot\! \frac{P_k}{P_i}(X_i^n)\!\right)^{\frac{1}{2}} \!\right] \!\right\}\notag
\\&= \exp\left\{-n B({P_i},{P_j})-nB({P_j},P_k)-nB({P_i},{P_k}) \right\}.\label{eq:covariance}
\end{align}
An upper bound for $\PP\left[\xi_{i,j},\xi_{k,l} \right]$ can be derived accordingly.
By substituting~\eqref{eq:lagrange} and~\eqref{eq:covariance} in~\eqref{eq:conv lb} we have
\begin{align}
&P_e^{(n)}\geq \notag\\
&\hspace{-.5cm}\frac{\left(\sum_{1\leq i<j\leq A_n}e^{-2nB(P_i,P_j)}\right)^2}{\sum_{i,j,k}e^{-nB(P_i,P_j)-nB(P_i,P_k)-nB(P_k,P_j)}\!+\!\left(\sum_{i,j}e^{-2nB(P_i,P_j)}\!\right)^2}\notag
\\&\hspace{-.5cm}\geq \frac{\left(\sum_{i,j}e^{-2nB(P_i,P_j)}\right)^2}{8\left(\sum\limits_{1\leq i<j\leq A_n}e^{-2nB(P_i,P_j)}\right)^{3/2}\!\!\!+\left(\sum\limits_{1\leq i<j\leq A_n}e^{-2nB(P_i,P_j)}\right)^2}
\label{eq:conv simplified}\\
&=\frac{\sqrt{\sum_{1\leq i<j\leq A_n}e^{-2nB(P_i,P_j)}}}{8+\sqrt{\sum_{1\leq i<j\leq A_n}e^{-2nB(P_i,P_j)}}}, \label{eq:conv final lb}
\end{align}
where~\eqref{eq:conv simplified} is by Lemma~\ref{lemma:graph}. 
As it can be seen from~\eqref{eq:conv final lb}, if  $\lim_{n\to \infty }\sum_{\substack{1\leq i<j\leq A_n}}e^{-2nB(P_i,P_j)}\neq 0$, the probability of error is bounded away from zero. As the result, we have to have 
$\lim_{n\to \infty }\sum_{\substack{1\leq i<j\leq A_n}}e^{-2nB(P_i,P_j)}=0$, which also matches our upper bound on probability of error  in~\eqref{eq:achieve final ub}.
 \end{proof}

\section{Conclusion}
In this paper, we generalized the identification problem to the case that the number of distributions grows with the blocklength $n$. We found matching upper and lower bounds on the probability of identification error. This result characterizes the relation between the number of distributions and the pairwise distance of the distributions for reliable identification.

\appendix
We first consider the case that r is an even number and then prove
 \begin{align}
r\!\left(n_k\right)^{\frac{r}{2}-1}\left( G(c_1)+\ldots G(c_{N_{r,k}})\right) \!\leq\! \frac{N_{r,k}r}{n_k}\! \left(a_1^2+\ldots+ {a_{n_k}}^2\right)^{\frac{r}{2}}\!.\label{eq:main ineq}
\end{align}
We may drop the subscripts and use $N:=N_{r,k}$ and $n:=n_k$ in the following for notational ease.
Our goal is to expand the right hand side (RHS) of~\eqref{eq:main ineq} such that all elements have coefficient $1$. Then, we parse these elements into $N$ different groups (details will be provided later) such that using the AM-GM inequality (i.e., $ n\left(\prod_{i=1}^n a_i\right)^{\frac{1}{n}}\leq \sum_{i=1}^n a_i$) on each group, we get one of the $N$ terms on the LHS of~\eqref{eq:main ineq}. Before stating the rigorous proof, we provide an example of this strategy for the graph with $k=4$ vertices shown in Fig.~\ref{fig:G6}. In this example, we consider the Lemma for $r=4$ cycles (for which we have $N=3$). 
\begin{figure}[htbp]
\centering
\includegraphics[width=.2\textwidth]{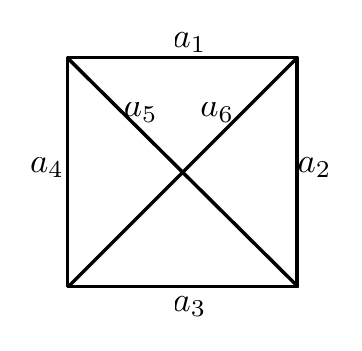}
\caption{A complete graph with $4$ vertices}
\label{fig:G6}
\end{figure}

We may expand the RHS in~\eqref{eq:main ineq} as
\begin{align*}
&\qquad2\left(a_1^2+\ldots +a_6^2 \right)^2
=\Theta_1+\Theta_2+\Theta_3,\\
&\Theta_1\!=\!\big\{a_1^4+a_2^4+a_3^4+a_4^4+a_1^2a_3^2+a_1^2a_3^2+a_2^2a_4^2+a_2^2a_4^2\\
&+a_1^2a_2^2+a_1^2a_2^2+a_1^2a_2^2+a_1^2a_2^2+a_1^2a_4^2+a_1^2a_4^2+a_1^2a_4^2+a_1^2a_4^2
\\&+a_2^2a_3^2+a_2^2a_3^2+a_2^2a_3^2+a_2^2a_3^2+a_3^2a_4^2+a_3^2a_4^2+a_3^2a_4^2+a_3^2a_4^2\big\}
\\&\Theta_2\!=\!\big\{a_1^4+a_6^4+a_3^4+a_5^4+a_5^2a_6^2+a_5^2a_6^2+a_1^2a_3^2+a_1^2a_3^2
\\&
+a_1^2a_6^2+a_1^2a_6^2+a_1^2a_6^2+a_1^2a_6^2+a_1^2a_5^2+a_1^2a_5^2+a_1^2a_5^2+a_1^2a_5^2
\\&+a_3^2a_6^2+a_3^2a_6^2+a_3^2a_6^2+a_3^2a_6^2+a_3^2a_5^2+a_3^2a_5^2+a_3^2a_5^2+a_3^2a_5^2\big\}
\\&\Theta_3\!=\!\big\{a_4^4+a_5^4+a_2^4+a_6^4+a_5^2a_6^2+a_5^2a_6^2+a_2^2a_4^2+a_2^2a_4^2
\\&
+a_4^2a_5^2+a_4^2a_5^2+a_4^2a_5^2+a_4^2a_5^2
+a_4^2a_6^2+a_4^2a_6^2+a_4^2a_6^2+a_4^2a_6^2
\\&
+a_2^2a_5^2+a_2^2a_5^2+a_2^2a_5^2+a_2^2a_5^2+a_2^2a_6^2+a_2^2a_6^2+a_2^2a_6^2+a_2^2a_6^2\big\}.
\end{align*}
It can be easily seen that if we use the AM-GM inequality on $\Theta_1$, $\Theta_2$ and $\Theta_3$, we can get the lower bound equal to $24(a_1a_2a_3a_4), 24(a_1a_6a_3a_5)$ and $24(a_4a_5a_2a_6)$, respectively where $rn^{\frac{r}{2}-1}=24$ and hence~\eqref{eq:main ineq} holds in this example.

We proceed to prove  Lemma~\ref{lemma:graph} for arbitrary $k$ and (even) $r\geq 2$. We propose the following scheme   to group the elements on the RHS of~\eqref{eq:main ineq} and then we prove that this grouping indeed leads to the claimed inequality in the Lemma. 

{\bf Grouping scheme:} For each cycle $c_i =\{a_{i_1}\ldots , a_{i_r}\}$, we need a group of elements, $\Theta_i$, from the RHS of~\eqref{eq:main ineq}. In this regard, we consider all possible subsets of the edges of cycle $c_i$ with $1:\frac{r}{2}$ elements (e.g. $\left\{\{a_{i_1}\},\ldots \{a_{i_1},a_{i_2}\},\ldots \{a_{i_1}\ldots, a_{i_{r/2}}\},\ldots\right\} $). For each one of these subsets, we find the respective elements from the RHS of~\eqref{eq:main ineq} that is the multiplication of the elements in that subset. For example, for the subset $\{a_{i_1},a_{i_2},a_{i_3}\}$, we consider the elements like $a_{i_1}^{n_{i_1}}a_{i_2}^{n_{i_2}}a_{i_3}^{n_{i_3}}$ for all possible $n_{i_1},n_{i_2},n_{i_3}>0$ from the RHS of~\eqref{eq:main ineq}. However, note that we do not assign all such elements to cycle $c_i$ only. If there are $l$ cycles of length $r$ that all contain $\{a_{i_1},a_{i_2},a_{i_3}\}$, we should assign $\frac{1}{l}$ of the elements like $a_{i_1}^{n_{i_1}}a_{i_2}^{n_{i_2}}a_{i_3}^{n_{i_3}}, \ n_{i_1},n_{i_2},n_{i_3}>0$ to cycle $c_i$ (so that we can assign the same amount of elements to other cycles with similar edges). 

We state some facts, which  can be easily verified:

{\bf Fact 1.} In a complete graph $K_k$, there are $N=N_{r,k}=\binom{k}{r}\frac{(r-1)!}{2}$ cycles of length $r$. 

{\bf Fact 2.} By expanding the RHS of~\eqref{eq:main ineq} such that all elements have coefficient $1$, we end up with $\left(\frac{N r}{n}\right) n^{\frac{r}{2}}$ elements.

{\bf Fact 3.} Expanding the RHS of~\eqref{eq:main ineq} such that all elements have coefficient $1$, and finding their product yields
\[\left(a_1\times\ldots\times a_n \right)^{\left(\frac{Nr}{n}\right)rn^{ \frac{r}{2}-1}}.\]

{\bf Fact 4.} In above grouping scheme each element on the RHS of~\eqref{eq:main ineq} is summed in exactly one group. Hence, by symmetry and Fact 2, each group is the sum of $r n^{\frac{r}{2}-1}$ elements.

Now, consider any two cycles $c^{(e)}_i=\{a_{i_1},\ldots , a_{i_r}\},c^{(e)}_j=\{a_{j_1},\ldots , a_{j_r}\} $. Assume that using the above grouping scheme, we get the group of elements $\Theta_i,\Theta_j$ (where by fact 3 each one is the sum of $r n^{\frac{r}{2}-1}$ elements).
If we apply the AM-GM inequality on each one of the two groups, we get 
\begin{align*}
\Theta_i\geq r n^{\frac{r}{2}-1} \left( a_{i_1}^{n_{i_1}}\times \ldots\times a_{i_r}^{n_{1_r}}  \right)^{\left(\frac{1}{r n^{\frac{r}{2}-1}}\right)},  \\
\Theta_j\geq r n^{\frac{r}{2}-1} \left( a_{j_1}^{n_{j_1}}\times \ldots\times a_{j_r}^{n_{j_r}}  \right)^{\left(\frac{1}{r n^{\frac{r}{2}-1}}\right)},
\end{align*}
where $\prod_{t=1}^r a_{i_t}^{n_{i_t}}$ is the product of the elements in $\Theta_i$.
 By symmetry of the grouping scheme for different cycles, it is obvious that $\forall t\in[1:r], n_{i_t}=n_{j_t}$. Hence $ n_{i_t}=n_{j_t}=p_t,\forall i,j\in [1:N]$. 
 i.e., we have
 \begin{align}
  \Theta_i &\geq r n^{\frac{r}{2}-1} \left( a_{i_1}^{p_1}\times \ldots\times a_{i_r}^{p_r}  \right)^{\left(\frac{1}{r n^{\frac{r}{2}-1}}\right)}\label{eq:p}.
 \end{align}

 By symmetry of the grouping scheme over the elements of each cycle, we also get that $n_{i_k}=n_{i_l}=q_i,\forall k,l\in [1:r]$. i.e.
 \begin{align}
 \Theta_i \geq r n^{\frac{r}{2}-1} \left( a_{i_1}^{q_i}\times \ldots\times a_{i_r}^{q_i}  \right)^{\left(\frac{1}{r n^{\frac{r}{2}-1}}\right)}.\label{eq:q}
 \end{align}
 It can be seen from~\eqref{eq:p} and~\eqref{eq:q} that all the elements of all groups have the same power  $n_{i_t}=p,\forall i\in[1:N], t\in [1:r]$. i.e.,
  \begin{align*}
 \Theta_i &\geq r n^{\frac{r}{2}-1} \left( a_{i_1}^{p}\times \ldots\times a_{i_r}^{p}  \right)^{\left(\frac{1}{r n^{\frac{r}{2}-1}}\right)}.
 \end{align*}
 Since each element on the RHS of~\eqref{eq:main ineq} is assigned to one and only one group and since $\prod_{t=1}^r a_{i_t}^{n_{i_t}}= \prod_{t=1}^r a_{i_t}^p$ is the product of the elements of each group $\Theta_i$, the product of all elements in $\Theta_1+\ldots +\Theta_{N}$ (which is equal to product of the elements in the expanded version of the RHS of~\eqref{eq:main ineq}) is
 $\prod_{i=1}^{N}\prod_{t=1}^r a_{i_t}^p$.
 
 In addition, since each $a_i$ appears in exactly $\frac{Nr}{n}$ of the cycles, by Fact 3 and a double counting argument, we have 
 \[p\times \frac{Nr}{n}=\left( \frac{Nr}{n}\right) rn^{ \frac{r}{2}-1},\]
 and hence $p=rn^{ \frac{r}{2}-1}$.
 Hence, the lower bound of the AM-GM inequality on the $\Theta_1+\ldots+ \Theta_{N}$, will result in
  \[rn^{ \frac{r}{2}-1} G(c_1)+\ldots+ rn^{ \frac{r}{2}-1} G(c_{N_r}),\] and  the Lemma is proved for even $r$.
 
 For odd values of $r$, the problem that may arise by using the grouping strategy in its current form, is when $r<\frac{k}{2}$. In this case, some of the terms on the RHS of~\eqref{eq:main ineq} may contain   multiplication of $a_i$'s that are not present in any of the $G(c_i)$'s. To overcome this, take both sides to the power of $2m$ for the smallest $m$ such that $rm>\frac{k}{2}$. Then the RHS of~\eqref{eq:main ineq} is at most the multiplication of $rm$ different $a_i$'s and on the LHS of~\eqref{eq:main ineq},  there are $2m$ cycles of length $r$ multiplied together. By our choice of $2m$, now, all possible combinations of $a_i$'s on the RHS are present in at least one cycle multiplication in the LHS. Hence, we can now continue the proof with the same strategy as even values of $r$ for the odd values of $r$.

\bibliography{refs}
\bibliographystyle{IEEEtran}

\end{document}